\documentclass[draftcls,onecolumn,12pt]{IEEEtran}
\usepackage{color}
\usepackage{verbatim}
\usepackage{amsfonts}
\usepackage{amssymb}
\usepackage{stfloats}
\usepackage{cite}
\usepackage{graphicx}
\usepackage{psfrag}
\usepackage{subfigure}
\usepackage{amsmath}
\usepackage{array}
\usepackage{epstopdf}
\usepackage{authblk}
\usepackage{graphicx} 
\usepackage{amsthm} 
\usepackage{lipsum}
\usepackage{verbatim} 
\usepackage{authblk}
\usepackage{mathtools}
\usepackage{cuted}
\usepackage[ruled,linesnumbered]{algorithm2e}
\usepackage{algpseudocode}
\usepackage{framed} 
\usepackage{subfigure}
\usepackage{soul}
\usepackage{bm}
\usepackage{setspace}
\usepackage{dsfont}
\usepackage[breaklinks,colorlinks,linkcolor=black,citecolor=black,urlcolor=black]{hyperref}
\newtheorem{theorem}{Theorem}

\newtheorem{proposition}{Proposition}

\newtheorem{remark}{\bf Remark}
\def\phi{\varphi}

\def\({\left(}
\def\){\right)}

\setcounter{page}{1}



\def\b0{{\mathbf{0}}}




\def\cF{\mathcal{F}}
\def\cG{\mathcal{G}}

\def\cK{\mathcal{K}}

\def\cS{\mathcal{S}}





\definecolor{LatestRevision}{rgb}{0.32, 0.18, 0.5}
\allowdisplaybreaks[4]


\title{Joint Batching and Scheduling for High-Throughput Multiuser Edge AI with Asynchronous Task Arrivals
}

\author{Yihan~Cang, \emph{Graduate Student Member, IEEE}, Ming Chen, \emph{Member, IEEE},\\ and Kaibin~Huang, \emph{Fellow, IEEE} 
	
\thanks{
		
Y. Cang is with Department of Electrical and Electronic Engineering at The University of Hong Kong, Hong Kong, and also with National Mobile Communications Research Laboratory, Southeast University, Nanjing 211111, China (email: yhcang@hku.hk). 

M. Chen is with National Mobile Communications Research Laboratory, Southeast University, Nanjing 211111, China, and also with Purple Mountain Laboratories, Nanjing 211100, China (email: chenming@seu.edu.cn).

K. Huang is with Department of Electrical and Electronic Engineering at The University of Hong Kong, Hong Kong (email: huangkb@eee.hku.hk). Corresponding author: K. Huang.}}

\makeatletter
\newcommand{\removelatexerror}{\let\@latex@error\@gobble}
\makeatother

\IEEEoverridecommandlockouts

\begin{document}

\maketitle
\vspace{-10mm}
\begin{abstract}
    Edge \emph{artificial intelligence} (AI) in the sixth-generation networks will provide inference services at the network edge to enrich the capabilities of mobile devices and lengthen their battery lives. As a well-known technique in computing, batching can boost the computation throughput at an edge server by assembling multiple tasks into a batch that is fed into a pre-trained prediction model. This reduces the memory-access frequency and hence accelerates the execution of each task. In a multiuser edge-AI system, the end-to-end latency depends not only on computation but also on communication, i.e., multiuser task uploading over a multi-access channel. In this paper, we study joint batching and (task) scheduling to maximise the throughput (i.e., the number of completed tasks) under the practical assumptions of heterogeneous task arrivals and deadlines. The design aims to optimise the number of batches, their starting time instants, and the task-batch association that determines batch sizes. The joint optimisation problem is complex due to multiple coupled variables as mentioned and numerous constraints including heterogeneous tasks arrivals and deadlines, the causality requirements on multi-task execution, and limited radio resources. Underpinning the problem is a basic tradeoff between the size of batch and waiting time for tasks in the batch to be uploaded and executed. Our approach of solving the formulated  mixed-integer problem is to transform it into a convex problem via integer relaxation method and $\ell_0$-norm approximation. This results in an efficient alternating optimization algorithm for finding a close-to-optimal solution. Specifically, it iterates between solving two sub-problems, optimal task-batch association and optimal batch starting time. The former is a linear program whose solution can be found using a derived scheme of greedy task selection while that of the latter is derived in closed form. In addition, we also design the optimal algorithm from leveraging \emph{spectrum holes}, which are caused by fixed bandwidth allocation to devices and their asynchronized multi-batch task execution, to admit unscheduled tasks so as to further enhance throughput. Simulation results demonstrate that the proposed framework of  joint batching and resource allocation can substantially enhance the throughput of multiuser edge-AI as opposed to a number of simpler benchmarking schemes, e.g., equal-bandwidth allocation, greedy batching and single-batch execution.
\end{abstract}

\begin{IEEEkeywords}
Edge AI, edge inference, batching, scheduling, radio resource allocation. 
\end{IEEEkeywords}

\section{Introduction}
\label{sec: introduction}

Edge \emph{Artificial Intelligence} (AI), a key feature of the sixth-generation (6G) mobile networks, will feature ubiquitous deployment of AI algorithms at the network edge to provide inference services to users \cite{9606720,8970161}. Then Internet-of-Things (IoT) devices can rely on the services to acquire intelligent capabilities ranging from visual perception to natural language processing. Realizing efficient edge AI in practice has to overcome both the communication and computing bottlenecks. The former results from many devices uploading high-dimensional data features to an edge server over a resource constrained multi-access channel. The second refers to the well known \emph{von Neumann bottleneck} where frequent data shuttling  between memory and processors (e.g., loading of AI model parameters) can incur as much as $90\%$ of total computation latency and energy \cite{backus1978can, zou2021breaking}. The consideration of end-to-end system performance makes it important to simultaneously overcome the two bottlenecks, which motivates this work. To this end, we design a framework of integrating batching (i.e., task execution in batches to alleviate the von Neumann bottleneck) and device scheduling to enhance the throughput of an edge AI system under the practical assumptions of heterogeneous task arrivals and deadlines. 

The area of edge AI, also called edge inference, involves cross-disciplinary research integrating wireless communication and AI to improve the end-to-end system performance \cite{9606720,9311935}. Many relevant algorithms are designed based on a popular architecture called \emph{split inference} that partitions a global deep neural network into an on-device and a server sub-models, which are connected by a wireless channel \cite{9311935, 9144210}. Given the architecture, a rich set of techniques have been designed to improve the communication efficiency including pruning the features extracted using the on-device sub-model \cite{Niu2019Infocom,Deniz2020SPAWC}, jointly training the sub-model and channel encoder \cite{9311935},  progressive transmission \cite{9955582}, and distributed data compression using the information-bottleneck approach \cite{9837474,9606667}. Controlling the model splitting point for split inference introduces another dimension for improving the communication efficiency. In \cite{9144210}, the point is jointly optimized with computation-resource allocation for a multi-core CPU to minimize the end-to-end latency of multiuser  tasks. From the perspective of implementation, edge AI algorithms can be deployed  on the \emph{mobile edge computing} (MEC) platform, a focus of 5G development,  to exploit its strengths in enabling latency-critical applications such as virtual reality  (see, e.g., \cite{8876870}). Furthermore, various practical issues for edge AI deployment have been addressed by researchers such as  joint management of communication and computation resources (see, e.g., \cite{9843917}), heterogeneous devices \cite{9296560}, and random task arrivals (see, e.g., \cite{9795664}).

In the context of multiuser edge AI, batching is mentioned earlier to be an effective technique for breaking the von Neumann  bottleneck so that an edge server can serve more users. Specifically, the advantage of batching lies in reusing the part of AI model loaded into a  \emph{graphics processing unit} (GPU) for multiple tasks to avoid frequent memory access \cite{9052125}. As a result, the computation latency per task is reduced and hence the throughput increases \cite{arxiv.2301.12865}. As mentioned, batching should be jointly designed with radio resource allocation to achieve optimal end-to-end performance for multiuser edge AI. Such designs are crucial for 6G AI empowered tactile applications  such as augmented reality (AR) and autonomous driving. In particular, AR requires latency lower than $20$ ms in order to  guarantee an immersive virtual experience for users. However, at its nascent stage, the mentioned area currently has few results \cite{10038543,9843917}. In \cite{10038543}, utilizing the tree-search method, the optimization problem of joint bandwidth allocation and task scheduling to maximize throughput is solved by proposing an efficient tree-search algorithm with intelligent tree pruning. On the other hand, the minimisation of user energy consumption is studied in \cite{9843917} under inference latency constraints. To solve the problem, different algorithms are presented for joint task scheduling and transmission-time control, which allow both online and offline implementation. For simplicity, backlogged tasks and single-batch optimization are assumed in prior work. On one hand, as queuing time is not accounted for, the existing designs cannot provide a guarantee on end-to-end latency between a task arrival and its completion where tasks may find difficulty in supporting real-time applications which require immediate execution of randomly arriving multiuser tasks. On the other hand, techniques from single-batch optimization are inefficient when dealing with the cases with a large number of concurrent tasks with asynchronous  arrivals or with a low arrival rate. In both cases, they can potentially result in long waiting time for those tasks that arrive earlier than others. Optimally forming multiple batches can perform better in such cases but its joint design with radio resource allocation remains as an open problem. 

It is worth mentioning that the issue of asynchronous task arrivals has been addressed in several studies in the MEC area  \cite{9238937,8468240,9292432}. Without targeting a specific task or application, these studies are all based on a generic processor model where computing speeds are measured in, for example, the number of clock cycles required for processing a bit \cite{7879258}. Furthermore, the processor speed is assumed to be controllable by adjusting its clock  frequency that changes its energy consumption following a measurement based model \cite{7572018}. Based on such models, computation-and-radio resources can be jointly managed to maximize the system energy efficiency or throughput under tasks' deadline requirements \cite{7553459}. Due to model abstraction, computing issues as elaborated by the von Neumann and batching have not been studied in the MEC literature. Thereby, the existing solutions are inadequate for solving the current problem of \emph{joint batching and scheduling} (JBAS) for multiuser edge AI. 

In this work, we make an attempt to solve this problem targeting a high-throughput  multiuser edge AI systems under the practical assumptions of asynchronous task arrivals and heterogeneous task deadlines. The problem is challenging for two reasons. First, there are numerous batching related parameters to optimize, namely the number of batches, starting time of individual batches, and the task-batch association. Second, meeting the task deadlines requires the control of end-to-end latency of each task that sums its communication and computation latency. This introduces coupling between batching and scheduling as well as radio resource allocation to scheduled devices. By developing efficient approaches to  solve the complex problem, we develop a framework for optimal JBAS.

The main contributions of this work are summarized as follows.  
\begin{itemize}

\item {\bf Optimal Joint Batching and Scheduling:} The framework of JBAS is designed by solving the JBAS optimization problem. First, we simplify the problem by converting it into an equivalent problem where one variable, the number of batches, is removed. Our technique is to set  the number of batches equal to its maximum by allowing empty batches. Second, the equivalent problem, which is a mixed-integer non-linear program, is made tractable by approximation through the  methods of integer relaxation and $\ell_0$-norm approximation. The resultant convex problem can be efficiently solved using  a proposed algorithm that alternatively solves the following two sub-problems. 
	\begin{itemize}
	\item \emph{Optimization of task-batch association: } The sub-problem is a linear program and its solution can be found using a derived scheme of greedy task selection. The scheme assigns each task to the most suitable batch as measured by a derived metric that accounts for different factors such as batching gain and the task's arrival time and uploading latency. 
	
	\item \emph{Optimization of batch starting time:}  Given the optimal task-batch association, the optimal starting time of each batch is derived in closed form. It is found to be the latest time a batch can start under the deadline and batch causality constraints so that the tasks in the batch use the least radio resources. 
    \end{itemize}
	
	\item {\bf Exploitation of Spectrum Holes:} The combined effects  of synchronized computation duration of tasks in a same batch, their asynchronous arrivals, and heterogeneous uploading durations  create spectrum holes that refer to unused frequency-time resource blocks. We design a spectrum-hole allocation algorithm to optimally exploit spectrum holes to enhance the throughput by admitting  originally unscheduled tasks. The corresponding optimization problem is transformed into a sequence of single-batch sub-problems, each attempting to jointly add new tasks to a specific batch and distribute spectrum holes among them.  The optimal solution for each sub-problem can be found by a linear search that sequentially tests the sub-problem's feasibility given the number of new scheduled tasks.
\end{itemize}
Simulations verify that the proposed JBAS algorithm yields significant performance gains as opposed to  existing  schemes, especially in the scenarios with tight resource constraints. Moreover, the proposed spectrum-hole allocation  scheme is shown to yield significant throughput enhancement. 


The rest of the paper is organized as follows. The system model is described in Section II. The problem of optimal JBAS is formulated in Section III and solved in Section IV. We present the design of the spectrum-hole allocation algorithm in Section V. In Section VI, the extensions to online design with new arriving tasks and frequency-selective channels are discussed. Simulation results are presented in Section VII followed by concluding remarks in Section VIII.

\section{System Model}
\label{sec: models_metrics}

Consider a single-cell system including $K$ devices and an edge server that doubles as an access point, as shown  in Fig.~\ref{fig_systemmodel}. Each device has a single task  that offloads a single data sample (e.g., an image or a video clip)  to the server for inference. 
These tasks  are assumed to share a common pre-trained prediction model, such as a large-scale classifier capable of discerning hundreds of object classes  \cite{5995477}. To reduce communication overhead and protect privacy, each scheduled device uploads a feature vector extracted from raw data using a local model.  
Prior to data uploading, each  device communicates to the server over a control channel the profile of its coming task containing the extracted feature size, arrival instant, and deadline requirement. 
Relevant models and metrics are described in the following sub-sections. 

\begin{figure*}[!t]
	\centering
	\includegraphics[width=0.5\textwidth]{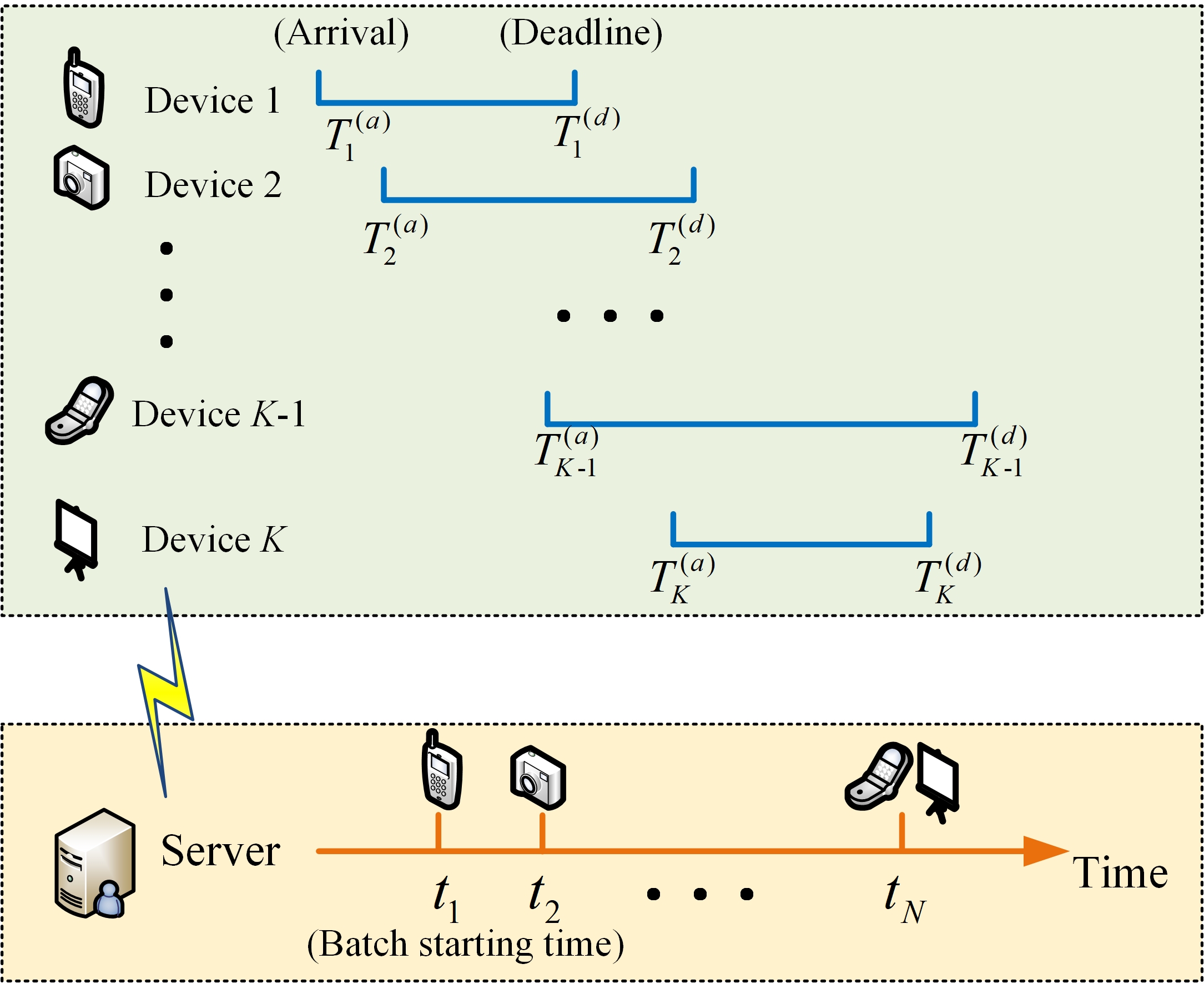}
	\caption{Edge inference system with asynchronous task arrivals.}\label{fig_systemmodel}
	\centering
\end{figure*}


\subsection{Task and Batching Models}
Tasks arrive at devices at random time instants with different sizes and  distinctive end-to-end delay requirements. An arbitrary device, say device $k$, has to finish a task  within the duration of $\left[T^{(a)}_k,T^{(d)}_k\right]$, where $T^{(a)}_k$ and $T^{(d)}_k$ respectively represent the task-arrival time instant and deadline. The  duration consists of three parts: 1) feature uploading phase, 2) task inference phase, and 3) result downloading phase \cite{9606540}. Due to the relatively small size of inference result (e.g., an object label) and high transmit power of the server, the duration of the result downloading phase is assumed negligible. 
To enhance the throughput, the server assembles received tasks  into a number of batches, denoted as $N$, which are fed sequentially to the prediction  model. 
Let $t_n$ with  $n\in\{1,2,\cdots,N\}$ denote the time instant when the processing of the $n$-th batch begins. It follows that  $t_1 < t_2 <\cdots <t_N$. 
To facilitate batching, let $\pi_{k,n}$ represent the association between task $k$ and the $n$-th batch. If the task $k$ is included in the $n$-th batch,  $\pi_{k,n}=1$; otherwise,  $\pi_{k,n}=0$. Since each task should be executed at most once,  
\begin{align} \label{eq_TaskBatchAsso}
	\sum_{n=1}^N \pi_{k,n}\leq1,\quad\forall k\in\mathcal{K}, 
\end{align}
where $\cK$ denotes the set of devices. The server decision on not sewing a device, say device $k$, corresponds to 
$\sum_{n=1}^N \pi_{k,n}= 0$. Then batching reduces to determining the association indicators $\left\{\pi_{k,n}\right\}$. Upon forming batches, the server sequentially inputs batches of feature vectors into the inference model and downloads results as soon as a batch is executed. 

\subsection{Uplink Communication Model}
For simplicity, we consider a frequency non-selective  channel that emerges as propagation distances keep reducing and the extension to frequency-selective channels is provided in Section VI.B. 
Its bandwidth $B$ is divided into $K$ sub-channels that are assigned to the scheduled devices. The bandwidth allocated to device $k$ is denoted as  $B_k$. 
Assume that the channels keep unchanged during the transmission period. The server is assumed to acquire accurate \emph{channel state information} (CSI) useful for resource allocation and device scheduling.   The spectrum efficiency of the channel between device $k$ and the server (in bits/second/Hz) is 
\begin{align} \label{eq_rk}
	r_k=\log_2\left(1+\frac{p_k h_k}{\sigma^2}\right), \quad\forall k\in\cK, 
\end{align} 
where $p_k$ represents the  transmit power, $h_k$ the channel power gain, and $\sigma^2$  the additive white Gaussian noise power. We can write the duration of feature uploading for task/device $k$ as 
\begin{align} \label{eq_tau}
	\tau^o_k = \left(\sum_{n=1}^N\pi_{k,n}t_n\right)-T^{(a)}_k ,  \quad\forall k\in\cK, 
\end{align}
where $T^{(a)}_k$ is the task-arrival instant as defined previously. 
Let $\ell_k$ represent the number of bits in extracted features for task $k$. Then $\ell_k=\tau^o_k B_k r_k$.  From \eqref{eq_tau}, 
\vspace{-0.1em}
\begin{align}
	B_k= \frac{\ell_k}{\left[\left(\sum_{n=1}^N\pi_{k,n}t_n\right)-T^{(a)}_k\right] r_k}, \quad\forall k\in\cK. 
\end{align}

\subsection{Inference Model}
Consider inference with batching \cite{arxiv.2301.12865}. For the $n$-th batch, all the uploaded  feature vectors satisfying $\pi_{k,n}=1$ are assembled and  input as a batch into the server inference model. 
The trained model comprises  multiple  sequential layers. When processing a batch, the server sequentially loads each layer  from the memory and then executes the batch until the batch traverses all layers. 
As found in  the literature, the inference delay increases approximately linearly as the batch size becomes large \cite{arxiv.2301.12865,arxiv.1703.09844}. Given the  association between tasks and batches,  $\left\{\pi_{k,n}\right\}$, the inference delay of the $n$-th batch can be modelled as \cite{arxiv.2301.12865,9843917,10038543}
\begin{align} \label{eq_dn}
	d_n\left(\pi_n\right) =a\pi_n+b, \quad \forall n\in\{1,\cdots,N\},  
\end{align}
where the batch size $\pi_n=\sum_{k=1}^K\pi_{k,n}$ is a positive integer. 
Note that $d_n(\pi_n)$ is a   monotonically increasing function. In the model in \eqref{eq_dn}, $a$ and $b$ depend on the specific inference model \cite{9843917,10038543}.  
Specifically,  $a$ represents the inference delay per task and $b$  the delay of memory access.


\section{Problem Formulation}
\label{sec: problem_formulation}

In this section, the design of JBAS is formulated as an optimization problem with the criterion of maximum system throughput, i.e., the number of completed tasks. According to the inference delay model in \eqref{eq_dn}, increasing the batch size can  reduce the inference delay per task. However, due to heterogeneous task arrival instants and deadlines, waiting for more tasks to arrive to form a batch hinders the completion of those with early deadlines. On the other hand, to start a batch earlier requires more radio  resources so as to finish uploading the associated tasks in time. As a result, there exist two tradeoffs: one between the  batch size and batch starting  instant and the other between communication  and computation resources.  Furthermore, the association between tasks and batches also needs to be optimized. 

Several practical constraints are considered. 
The first is the task-causality constraint, namely that the processing of a batch cannot begin until the arrivals of all associated tasks: 
\begin{align} \label{eq_Tka}
\pi_{k,n} T^{(a)}_k< t_n,\quad\forall k\in\cK,\forall n \in\{1,\cdots,N\}. 
\end{align}
The second constraint enforces the deadline requirements  of scheduled tasks: 
\begin{align} \label{eq_Tkd}  \pi_{k,n}\left[t_n + d_n\left(\pi_n\right)\right] \leq T^{(d)}_k, \quad\forall k\in\cK,\forall n \in\{1,\cdots,N\}. 
\end{align}
Note that  when a task, say task  $k$,  is not associated with the $n$-th batch, i.e., $\pi_{k,n}=0$,  constraints \eqref{eq_Tka} and \eqref{eq_Tkd} are always satisfied.
The third constraint reflects sequential batch processing, namely that the $\left(n+1\right)$-th batch is not processed   until the $n$-th batch  finishes its inference: 
\begin{align}
 t_n + d_n\left(\pi_n\right) \leq t_{n+1}, \quad\forall n \in\{1,\cdots,N-1\}. 
\end{align}
Last, the bandwidth constraint is given as 
\begin{align}
\sum_{k=1}^K\sum_{n=1}^N\pi_{k,n}\frac{\ell_k}{r_k\tau_k^o}\leq B. 
\end{align}
 
Under the above constraints, we aim at optimizing the bandwidth allocation, the number of batches,  their starting instants, as well as the task-batch association. Note that a task that is not assigned to any batch is not scheduled for execution. Then the JBAS optimization problem is formulated as 
\begin{equation*}  \text{(P1)} \quad\ 
	\begin{aligned}
		\max\limits_{ \{t_n\},\{\pi_{k,n}\},N}\quad & \sum_{k=1}^K\sum_{n=1}^N \pi_{k,n},  \\
		\mathrm{s.t. }\quad\ \quad 
		& \pi_{k,n} T^{(a)}_k < t_n,\quad\forall k\in\cK,\forall n \in\{1,\cdots,N\}, \\
		& \pi_{k,n}\left[t_n + d_n\left(\pi_n\right)\right] \leq T^{(d)}_k, \quad\forall k\in\cK,\forall n \in\{1,\cdots,N\}, \\
		& t_n + d_n\left(\pi_n\right) \leq t_{n+1}, \quad\forall n \in\{1,\cdots,N-1\}, \\
		& \sum_{k=1}^K\sum_{n=1}^N \pi_{k,n}\frac{\ell_k}{r_k\tau_k^o} \leq B, \\
		&\sum_{n=1}^N \pi_{k,n}\leq 1,\quad\forall k\in\mathcal{K},\\
		&\pi_{k,n}\in\{0,1\}, \quad\forall k\in\cK,\forall n \in\{1,\cdots,N\}, \\
		& N\in\mathbb{Z}^+, N\leq K.
	\end{aligned}
\end{equation*}
Problem (P1) is non-convex and NP-hard to solve due to the binary task-batch association indicators as well as the  coupling  between optimization variables \cite{10038543,KORTE1981}. Furthermore, the variable number of batches, $N$,  can change the cardinalities of batch starting instants, $\{t_n\}$, as well as the association indicators,  $\{\pi_{k,n}\}$, further complicating this problem. 

\section{Optimal JBAS Algorithm}
In this section, we design an efficient algorithm for JBAS by approximately solving Problem (P1). 
The proposed solution approach is to transform the problem to an equivalent, simpler one with the number of bathes fixed. Then applying the method of integer relaxation allows the equivalent problem to be solved using an alternating optimization algorithm. Its complexity is analyzed. 



\subsection{A Tractable Solution Approach} 
\subsubsection{An Equivalent Problem}
First, Problem (P1) can be transformed into the following equivalent problem:  
\vspace{-1em}
\begin{equation*}\text{(P2)}
\begin{aligned}
\max\limits_{\{t_n\},\{\pi_{k,n}\},N}\quad & \sum_{k=1}^K\sum_{n=1}^N \pi_{k,n},  \\
   \mathrm{s.t. }\quad\ \quad 
   & \pi_{k,n} T^{(a)}_k <  t_n,\quad\forall k\in\cK,\forall n \in\{1,\cdots,N\}, \\
   & t_n + d_n\left(\pi_n\right) \leq T^{(d)}_k + \left(1-\pi_{k,n}\right)\Xi, \quad\forall k\in\cK,\forall n \in\{1,\cdots,N\}, \\
   & t_n + d_n\left(\pi_n\right) \leq t_{n+1}, \quad\forall n \in\{1,\cdots,N-1\}, \\
   & \sum_{k=1}^K\sum_{n=1}^N \pi_{k,n}\frac{\ell_k}{r_k\left(t_n- T_k^{(a)}\right)} \leq B, \\
   &\sum_{n=1}^N \pi_{k,n}\leq 1,\quad\forall k\in\mathcal{K},\\
   &\pi_{k,n}\in\{0,1\}, \quad\forall k\in\cK,\forall n \in\{1,\cdots,N\}, \\
   & N\in\mathbb{Z}^+, N\leq K.
\end{aligned}
\end{equation*}
where the constant $\Xi\triangleq \max_{k\in\mathcal{K}}T^{(d)}_k + d_N\left(K\right)$. 
Problem (P2) is different from the conventional \emph{mixed integer nonlinear programming} (MINLP) problem since the number of variables varies with the number of batches, $N$. Without loss of generality, we  propose to mend the difference by fixing $N$ as $N=K$ by allowing the existence of empty batches. This results in the following MINLP problem: 
\vspace{-0.1em}
\begin{equation*}\text{(P3)} 
	\begin{aligned}
		\quad\quad \max\limits_{\{t_n\},\{\pi_{k,n}\}}\quad & \sum_{k=1}^K\sum_{n=1}^N \pi_{k,n},  \\
		\mathrm{s.t. }\quad\ \quad 
		& \pi_{k,n} T^{(a)}_k < t_n,\quad\forall k\in\cK,\forall n \in\{1,\cdots,N\}, \\
		& t_n + d_n\left(\pi_n\right) \leq T^{(d)}_k + \left(1-\pi_{k,n}\right)\Xi, \quad\forall k\in\cK,\forall n \in\{1,\cdots,N\}, \\
		& t_n + d_n\left(\pi_n\right) \leq t_{n+1}, \quad\forall n \in\{1,\cdots,N-1\}, \\
		& \sum_{k=1}^K\sum_{n=1}^N \pi_{k,n}\frac{\ell_k}{r_k\left(t_n- T_k^{(a)}\right)} \leq B, 
  \end{aligned}
  \end{equation*}
  \begin{equation*}
      \begin{aligned}
		&\sum_{n=1}^N \pi_{k,n}\leq 1,\quad\forall k\in\mathcal{K},\\
		&\pi_{k,n}\in\{0,1\}, \quad\forall k\in\cK,\forall n \in\{1,\cdots,N\}, 
	\end{aligned}
\end{equation*}
where the corresponding inference delay evolves as: 
\begin{align} \label{eq_dn2}
 d_n\left(\pi_n\right)=\left\{\begin{aligned}
		&a\pi_{n}+b, \quad\text{if }\pi_{n}>0,\\
		&0, \ \ \quad\quad\quad\text{if }\pi_{n}=0,
	\end{aligned}\right.
\end{align}
for all $n$. The following theorem gives the equivalence between Problems (P2) and (P3). 
\begin{theorem} \label{theorem_3}
	Problems (P3) and (P2) are equivalent in the sense that their optimal objectives are identical.
\end{theorem}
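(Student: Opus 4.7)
The plan is to prove the equivalence by exhibiting objective-preserving mappings between the feasible sets of Problems (P2) and (P3) in both directions; the optimal objectives must then coincide. The key observation enabling both directions is the modified delay rule in \eqref{eq_dn2}, according to which an empty batch $n$ (i.e., one with $\pi_{k,n}=0$ for all $k$) incurs zero inference delay and contributes nothing to the bandwidth sum. This is what makes empty batches behave like ``placeholders'' that can be freely inserted or removed without disturbing the non-empty batches.

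First, I would show that the optimal value of (P3) is at least that of (P2). Given any feasible solution $(N^\star, \{t_n\}_{n=1}^{N^\star}, \{\pi_{k,n}\})$ of (P2) with $N^\star \leq K$, I would construct a feasible solution of (P3) by keeping the $N^\star$ batches unchanged and appending $K - N^\star$ empty batches whose starting instants satisfy $t_{N^\star} + d_{N^\star}(\pi_{N^\star}) < t_{N^\star+1} < \cdots < t_K$ and lie arbitrarily close together. For each appended batch, the task-causality constraint reduces to $0 < t_n$, the batch-sequence constraint reduces to $t_n \leq t_{n+1}$ (since $d_n(0)=0$), the bandwidth contribution vanishes, and the deadline constraint reduces to $t_n \leq T_k^{(d)} + \Xi$, which holds because $\Xi$ was defined precisely to dominate this bound. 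The objective $\sum_k\sum_n \pi_{k,n}$ is preserved.

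Conversely, for the direction that (P2)'s optimum is at least (P3)'s, I would take any feasible solution $(\{t_n\}_{n=1}^K, \{\pi_{k,n}\})$ of (P3), delete every empty batch, and reindex the surviving non-empty batches in their original temporal order to obtain a candidate for (P2) with $N$ equal to the number of non-empty batches. All constraints of (P2) are inherited directly from the corresponding constraints of (P3) restricted to the retained batches, again using that deleted batches contribute zero delay and zero bandwidth; in particular, the chain $t_n + d_n(\pi_n) \leq t_{n+1}$ survives because collapsing empty links of a zero-cost monotone chain yields a shorter monotone chain of the same form. The objective is unchanged, completing the two-sided inequality.

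The step that requires the most care is verifying the deadline constraint $t_n + d_n(\pi_n) \leq T_k^{(d)} + (1-\pi_{k,n})\Xi$ on each appended empty batch in the forward direction. This is exactly why the big-$M$ constant $\Xi \triangleq \max_{k\in\cK} T_k^{(d)} + d_N(K)$ was introduced: it is large enough that, combined with the $(1-\pi_{k,n})$ switch, the deadline constraint becomes vacuous whenever $\pi_{k,n}=0$, so any time instant $t_n$ we pick for a padded empty batch, as long as it respects the batch-sequence ordering, automatically satisfies the deadline constraint for every device $k$. Once this slack is observed, the remaining constraints transfer mechanically and the equivalence follows.
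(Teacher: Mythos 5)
Your proof is correct and follows essentially the same route as the paper's: a two-sided inequality established by padding a feasible (P2)-solution with $K-N^*$ empty batches (which incur zero delay and zero bandwidth under \eqref{eq_dn2}) to obtain a feasible (P3)-solution of the same objective value, and conversely mapping a (P3)-solution back into (P2). The only differences are cosmetic — the paper places all padded batches at $t_n=\Xi$ and asserts the reverse direction directly, whereas you choose strictly increasing padded instants and explicitly delete and reindex the empty batches, which is if anything slightly more careful given that (P2)'s original delay model assigns a positive delay $b$ to an empty batch.
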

\noindent The proof is provided in Appendix~\ref{proof_theorem_3}.   $\hfill\square$

Theorem~\ref{theorem_3} allows us to solve Problem (P2) by solving Problem (P3) that leverages MINLP. 

\subsubsection{Integer Relaxation and Alternating Optimization}
To solve (P3), the method of integer  relaxation is adopted to obtain an   approximate solution (see e.g., \cite{9179773}). Specifically, the binary variables $\{\pi_{k,n}\}$ are relaxed as continuous ones belonging to $[0,1]$. It should be emphasized that the relaxation does not compromise the optimaliy as discussed in Remark~\ref{Re:Optimality}. 
Due to the existence of empty batches, the inference delay function $d_n\left(\pi_n\right)$ has a step at $\pi_n=0$, making Problem (P3) nonconvex. To address the issue,  it can be rewritten in a form comprising $\ell_0$-norm as 
\begin{align} \label{eq_D}
 d_n\left(\pi_n\right)=a\pi_{n}+b\mathds{1}_{\left\{\pi_{n}\right\}}=a\pi_{n}+b\left\|\pi_{n}\right\|_0, 
\end{align}
where $\|\cdot\|_0$ is $\ell_0$-norm and  $\mathds{1}_{\left\{x\right\}}$ is the indicator function that is $1$ if $x>0$ and $0$ otherwise. 
The non-smooth $\ell_0$-norm can be well approximated by a series of convex weighted $\ell_1$-norms, which is a  commonly used technique in compressive sensing (see e.g., \cite{7437385,7102696}). Using this technique, the $\ell_0$-norm term in \eqref{eq_D} can be approximated by an asymptotically equivalent term as 
\begin{align} 
\left\|\sum_{k=1}^K\pi_{k,n}\right\|_0=\lim_{\delta\rightarrow0}\frac{\ln\left(1+\delta^{-1}\sum_{k=1}^K\pi_{k,n}\right)}{\ln\left(1+\delta^{-1}\right)}. 
\end{align}
Since the logarithmic function is concave and upper bounded by the first-order term of Taylor's expansion, we have
\vspace{-0.1em}
\begin{align}\label{eq_ell0}
\left\|\sum_{k=1}^K\pi_{k,n}\right\|_0\leq&\theta_n^{(r)}\sum_{k=1}^K\pi_{k,n}+\psi_{n}^{(r)}, 
\end{align} 
with 
\begin{align} \label{eq_thetan}
\theta_n^{(r)}=\frac{\delta^{-1}\left(1+\delta^{-1}\sum_{k=1}^K\pi_{k,n}^{(r)}\right)^{-1}}{\ln\left(1+\delta^{-1}\right)},
\end{align}
 and
 \begin{align} \label{eq_psin}
\psi_n^{(r)}=\frac{\ln\left(1+\delta^{-1}\sum_{k=1}^K\pi_{k,n}^{(r)}\right)+\left(1+\delta^{-1}\sum_{k=1}^K\pi_{k,n}^{(r)}\right)^{-1}-1}{\ln\left(1+\delta^{-1}\right)},
 \end{align} 
where $\pi_{k,n}^{(r)}$ represents  the  value of $\pi_{k,n}$ at the previous iteration and $\delta$ is a sufficiently small constant. The equality in \eqref{eq_ell0} holds if and only if $\pi_{k,n}=\pi_{k,n}^{(r)}$ for all $(n,k)$. Through the above iterative updates of $\theta_n^{(r)}$ and $\psi_n^{(r)}$,   the difference between $\left\|\pi_{n}\right\|_0$ and its first-order term of Taylor's expansion diminishes until the equality in \eqref{eq_ell0} holds.  
Then substituting \eqref{eq_ell0} into \eqref{eq_D}, the inference delay function can be approximated as  
\begin{align} \label{eq_D1}
d_n\left(\pi_n\right)\approx\left(a+b\theta_n^{(r)}\right)\sum_{k=1}^K\pi_{k,n}+b\psi_n^{(r)},  \quad \forall n, 
\end{align}
which is continuous and linear. Using \eqref{eq_D1},  Problem (P3) can be approximated as 
\vspace{-0.5mm}
\begin{equation*}\text{(P4)} 
	\begin{aligned}
		\max\limits_{\{t_n\},\{\pi_{k,n}\}}\quad & \sum_{k=1}^K\sum_{n=1}^N \pi_{k,n},  \\
		\mathrm{s.t. }\quad\ \quad 
		& \pi_{k,n} T^{(a)}_k < t_n,\quad\forall k\in\cK,\forall n \in\{1,\cdots,N\}, \\
		& t_n + \left(a+b\theta_n^{(r)}\right)\sum_{k=1}^K\pi_{k,n}+b\psi_n^{(r)} \leq T^{(d)}_k + \left(1-\pi_{k,n}\right)\Xi, \forall k\in\cK,\forall n \in\{1,\cdots,N\}, \\
		& t_n + \left(a+b\theta_n^{(r)}\right)\sum_{k=1}^K\pi_{k,n}+b\psi_n^{(r)} \leq t_{n+1}, \quad\forall n \in\{1,\cdots,N-1\}, \\
		& \sum_{k=1}^K\sum_{n=1}^N \pi_{k,n}\frac{\ell_k}{r_k\left(t_n- T_k^{(a)}\right)} \leq B, \\
		&\sum_{n=1}^N \pi_{k,n}\leq 1,\quad\forall k\in\mathcal{K},\\
		&0\leq \pi_{k,n} \leq 1, \quad\forall k\in\mathcal{K},\forall n \in\{1,\cdots,N\}.
	\end{aligned}
\end{equation*}
This problem is convex and can be readily solved utilizing the approach of alternating optimization.  We propose to alternate solving two reduced-dimension sub-problems as described in the following sub-sections. As a result, the complexity is dramatically reduced as opposed to directly solving Problem (P4) and furthermore useful insight can be obtained. It is worth mentioning that alternating optimization provides no guarantee on reaching the global optimal point since  the constraints in Problem (P4) are not \emph{box constraints} (see, e.g., \cite{grippo2000convergence}).  The complete algorithm is presented in Algorithm~\ref{alg1}.

\subsection{Optimal Task-Batch  Association}
The first sub-problem results from fixing the starting time of batches, $\left\{t_n\right\}$, in Problem (P4). Then it reduces to a linear program. The dual problem of (P4) with respect to task-batch association,  $\{\pi_{k,n}\}$, is given as 
\begin{align} \label{eq_dual}
	\min_{\{\beta_{k,n}\},\{\gamma_{k,n}\},\rho} G\left(\beta_{k,n},\gamma_{k,n},\rho\right),
\end{align}
where $G\left(\beta_{k,n},\gamma_{k,n},\rho\right)$ is the dual function that solves  
\begin{align} \label{eq_G}
		\max_{\{\pi_{k,n}\}}\ \   &\mathcal{L}\left(\pi_{k,n},\beta_{k,n},\gamma_{k,n},\rho\right),\\
		\mathrm{s.t. } \ \  & \sum_{n=1}^N\pi_{k,n}\leq 1,\quad\forall k\in\cK,\nonumber\\
		& 0\leq\pi_{k,n}\leq1,\quad\forall k\in\cK,\forall n\in\{1,\cdots,N\}.\nonumber
\end{align}
In \eqref{eq_G}, $\mathcal{L}\left(\pi_{k,n},\beta_{k,n},\gamma_{k,n},\rho\right)$ denotes the partial Lagrangian function of Problem (P4): 
\vspace{-0.5mm}
\begin{align} \label{eq_Lagrange}
	&\mathcal{L}\left(\pi_{k,n},\alpha_{k,n},\beta_{k,n},\gamma_{k,n},\rho\right)=\sum_{k=1}^K\sum_{n=1}^N \pi_{k,n} \nonumber\\
	&-\sum_{k=1}^K\sum_{n=1}^N\beta_{k,n}\left[t_n + \left(a+b\theta_n^{(r)}\right)\left(\sum_{k=1}^K \pi_{k,n}\right)+b\psi_n^{(r)} - T^{(d)}_k - \left(1-\pi_{k,n}\right)\Xi\right]\nonumber\\
	&-\sum_{n=1}^{N-1}\gamma_{n}\left[t_n + \left(a+b\theta_n^{(r)}\right)\left(\sum_{k=1}^K \pi_{k,n}\right)+b\psi_n^{(r)} - t_{n+1}\right] -\rho\left(\sum_{k=1}^K\sum_{n=1}^N \pi_{k,n}\frac{\ell_k}{r_k\left(t_n- T_k^{(a)}\right)} - B\right)\nonumber\\
\end{align} 
where $\beta_{k,n}$, $\gamma_{k,n}$, and $\rho$ are non-negative Lagrange multipliers associated with the deadline, batch causality, and bandwidth allocation constraints, respectively. Besides, we let $\gamma_{0}=\gamma_{N}=0$ for consistency. We can observe that  \eqref{eq_Lagrange} is linear with respect to $\pi_{k,n}$. Therefore, to maximize the Lagrange function with fixed multipliers, the optimal $\pi_{k,n}$ is either zero or one.  Specifically, for all $n$, if $\pi_{k,n}$ are less than or equal to zero, task $k$ is not scheduled, i.e., $\pi_{k,n}=0$; otherwise, this task is associated with the batch that has the largest coefficient: 
\vspace{-1mm}
\begin{align} \label{eq_pikn}
	\pi_{k,n}^*=\left\{\begin{aligned}
		&1, \quad \text{if }n=\arg\max_{n\in\{1,\cdots,N\}} \mu_{k,n},\\
		&0, \quad\text{otherwise},
	\end{aligned}\right. 
\end{align}
where $\mu_{k,n}=1 - \left(a+b\theta_n^{(r)}\right)\sum_{k=1}^K\beta_{k,n}-\left(a+b\theta_n^{(r)}\right)\gamma_{n} - \Xi\beta_{k,n} -\rho\frac{\ell_k}{r_k\left(t_n- T_k^{(a)}\right)}$.  If there are multiple batches  satisfying $\arg\max_{n\in\{1,\cdots,N\}} \mu_{k,n}$, we can choose any of them due to the non-strict convexity of Problem (P3). Then substituting \eqref{eq_pikn} into \eqref{eq_G}, we can obtain $G\left(\beta_{k,n},\gamma_{k,n},\rho\right)$. 

\begin{remark} 

\emph{(Favourable Task Conditions)
According to \eqref{eq_pikn}, one can infer that for a task, as the channel condition becomes worse, its likelihood of being scheduled reduces as uploading the task requires more radio resources or else incurs higher latency. 
Moreover, early task-arrival time increases the probability that a task is successfully executed due to the following two reasons: 
1) the larger batching gain, 
and 2) the longer communication time that increases the probability of successful feature uploading. 
Last, by combining \eqref{eq_thetan} and \eqref{eq_pikn}, we can observe  that a task prefers a larger batch as its inference delay per task is smaller due to the batching gain.}
\end{remark}

Given  $G\left(\beta_{k,n},\gamma_{k,n},\rho\right)$, we attempt to solve the dual problem \eqref{eq_dual} to get the the optimal dual variables $\{\beta_{k,n}\},\{\gamma_{k,n}\},\rho$. Note that $G\left(\beta_{k,n},\gamma_{k,n},\rho\right)$ is not differentiable in general due to the discontinuous selection operations in obtaining the optimal $\pi_{k,n}^*$. To this end, the value of dual variables is updated by the sub-gradient method\cite{cot}. Thus, through iteratively optimizing primal variables and dual variables, the optimal tasks and batches association $\pi_{k,n}$ with fixed $t_n$ can be obtained directly without rounding according to the following remark.  

\begin{remark}\label{Re:Optimality}
\emph{(Optimality of Task-Batch Association) 
We can observe that for task $k$, there exists at most a single element among $\{\pi_{k,n}\}$ that is equal to one while others are set as zero according to \eqref{eq_pikn}. This indicates that although the feasible range of $\pi_{k,n}$ is relaxed to be continuous, the optimal solution to
Problem (P4) with respect to $\pi_{k,n}$ always satisfies the binary constraint $\pi_{k,n}\in\{0,1\}$ for all $(n,k)$. Hence, the relaxation of $\pi_{k,n}$ does not compromise the optimality of the original Problem (P3). }
\end{remark}



\vspace{-1cm}
\subsection{Optimal Batch Starting Time}
The other sub-problem results from fixing the task-batch association, $\{\pi_{k,n}\}$, in Problem (P4).  As a result, the sub-problem is written as 
\vspace{-0.1em} 
\begin{equation*}\text{(P5)}
	\begin{aligned}
		\max\limits_{\{t_n\}}\quad & \sum_{k=1}^K\sum_{n=1}^N \pi_{k,n},  \\
		\mathrm{s.t. }\quad 
		& \pi_{k,n} T^{(a)}_k < t_n,\quad\forall k\in\cK,\forall n \in\{1,\cdots,N\}, \\
		& t_n + d_n\left(\pi_n\right) \leq T^{(d)}_k + \left(1-\pi_{k,n}\right)\Xi, \quad\forall k\in\cK,\forall n \in\{1,\cdots,N\}, \\
		& t_n + d_n\left(\pi_n\right) \leq t_{n+1}, \quad\forall n \in\{1,\cdots,N-1\}, \\
		& \sum_{k=1}^K\sum_{n=1}^N \pi_{k,n}\frac{\ell_k}{r_k\left(t_n- T_k^{(a)}\right)} \leq B. 
	\end{aligned}
\end{equation*}
\begin{theorem} \label{theorem_2}
The optimal starting time of the $n$-th batch, which solves Problem (P5), is given as 
	\begin{align} \label{eq_tn}
		t^*_n=\left\{\begin{aligned}
			&\min\left\{\chi^{(d)}_{n},t_{n+1}\right\} - d_n\left(\pi_n\right),\quad\text{if }\pi_{n}>0,\\
			&t^*_{n+1},\quad\quad\quad\quad\quad\quad\quad\quad\quad\quad\ \text{otherwise}, 
		\end{aligned}\right. \quad\left(n=N,\cdots,1\right),
	\end{align}
	where $\chi^{(d)}_n=\min_{ k\in\cK_n}T^{(d)}_k$ denotes the minimum deadline among all the tasks processed in the $n$-th batch, i.e., $\cK_n=\left\{k|\pi_{k,n}=1\right\}$ $\left(\forall n\in\{1,\cdots,N\}\right)$, and $t^*_{N+1}=\Xi$.  
\end{theorem}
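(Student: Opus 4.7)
\emph{Proof proposal.} The plan is to recognise that, with $\{\pi_{k,n}\}$ fixed, Problem (P5) has an objective that is constant in $\{t_n\}$, so its real content is to select a feasible $\{t_n\}$; among all feasible choices, the one that maximally slackens the bandwidth constraint (and thereby best primes the next alternating iteration) is the coordinate-wise largest $t_n$. I would prove Theorem~\ref{theorem_2} by showing that the recursion in \eqref{eq_tn} computes exactly this coordinate-wise maximum.

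\textbf{Step 1 (monotonicity).} The bandwidth-consumption functional $\sum_{k,n}\pi_{k,n}\ell_k/[r_k(t_n-T^{(a)}_k)]$ is strictly decreasing in each $t_n$ on the admissible range $t_n>T^{(a)}_k$, while every remaining upper-bound constraint (deadline, batch sequence) enters as $t_n\le(\cdot)$. Hence pushing each $t_n$ to its largest feasible value preserves feasibility of every constraint simultaneously and strictly relaxes the bandwidth budget.

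\textbf{Step 2 (upper bounds on each $t_n$).} For a non-empty batch $\pi_n>0$, the deadline constraint $\pi_{k,n}[t_n+d_n(\pi_n)]\le T^{(d)}_k$ applied to all $k\in\cK_n$ gives $t_n\le\chi^{(d)}_n-d_n(\pi_n)$, and the sequencing constraint gives $t_n\le t_{n+1}-d_n(\pi_n)$; intersecting them yields the first branch of \eqref{eq_tn}. For an empty batch $\pi_n=0$, $d_n(\pi_n)=0$ and the deadline constraint is vacuous, leaving only $t_n\le t_{n+1}$; saturating it gives $t^*_n=t^*_{n+1}$, which matches the second branch and (because $d_n=0$) leaves the bound on the preceding batch unchanged, so an empty batch effectively disappears from the schedule.

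\textbf{Step 3 (backward sweep).} Since every upper bound on $t_n$ depends only on $t_{n+1}$ (never on earlier indices), the componentwise maxima can be assembled by a single backward pass anchored at $t^*_{N+1}=\Xi\triangleq\max_k T^{(d)}_k+d_N(K)$; this artificial boundary is large enough that the sequencing bound never binds at $n=N$. Feasibility of the resulting $\{t^*_n\}$ is automatic from the construction, and optimality (coordinate-wise maximality) follows by a one-line induction: if $t^*_{n+1}$ is the largest attainable value for $t_{n+1}$, then the upper bound $\min\{\chi^{(d)}_n,t_{n+1}\}-d_n(\pi_n)$, being monotone non-decreasing in $t_{n+1}$, implies $t_n\le t^*_n$ for every feasible schedule.

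\textbf{Main obstacle.} The subtlety is the lower-bound causality constraint $T^{(a)}_k<t_n$ for $k\in\cK_n$: pushing $t_n$ upward cannot violate it, but one still has to confirm that the recursion does not drive $t^*_n$ \emph{below} some $T^{(a)}_k$. I would handle this with a feasibility argument: if (P5) admits any schedule $\{\tilde t_n\}$ under the given $\{\pi_{k,n}\}$, then by the componentwise-maximality shown in Step 3 we have $t^*_n\ge \tilde t_n>T^{(a)}_k$, so causality is inherited for free; otherwise (P5) is infeasible and the theorem's conclusion is vacuous. This completes the plan.
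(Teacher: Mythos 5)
Your proposal is correct and follows essentially the same route as the paper's proof: exploit the monotone decrease of the required bandwidth in $t_n$ to argue that each $t_n$ should be pushed to its largest feasible value, then compute these maxima by a backward sweep with the empty/non-empty case split. Your Step 3 induction and the explicit treatment of the causality lower bound $T^{(a)}_k<t_n$ make the componentwise-maximality claim slightly more rigorous than the paper's informal argument, but the underlying idea is identical.
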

\noindent The proof is provided in Appendix~\ref{proof_theorem_2}.  $\hfill\blacksquare$

From Theorem~\ref{theorem_2}, we can observe that with fixed $\pi_{k,n}$, $t_n$ is only determined by the deadlines of tasks in the $n$-th batch and $t_{n+1}$.  The starting time of a batch, $t_n$, is set as the latest starting time that can ensure the latency and batch causality  constraint such that the scheduled tasks occupy the least radio resources.  

\subsection{Complexity Analysis}

The complexity of Algorithm~\ref{alg1} is largely attributed to solving the two  subproblems solved in the preceding subsections. The complexity in optimizing the  task-batch association is  $\mathcal{O}\left(K^2/\sqrt{\epsilon}\right)$ based on \eqref{eq_pikn}, where $\epsilon$ represents the predefined accuracy of the dual method \cite{boyd2004convex}. The complexity of calculating the optimal batch starting instants is  $\mathcal{O}\left(K\right)$ according to \eqref{eq_tn}. The overall complexity is given by $\mathcal{O}\left(L\left(K^2/\sqrt{\epsilon}+K\right)\right)$, where $L$ denotes the average number of iterations in Algorithm~\ref{alg1}.  

Last, the complexity of  Algorithm~\ref{alg1} is much lower than the conventional interior point method for directly solving Problem (P4) whose complexity is $\mathcal{O}\left(L\left(K^2+K\right)^{3.5}\right)$ \cite{boyd2004convex}.

\begin{algorithm}[t]
	\begin{small}
		\caption{JBAS Algorithm}
		\label{alg1}
		
		Initialize $t_n=\frac{\max_{k\in\cK}T^{(d)}_k-\min_{k\in\cK}T^{(a)}_k}{N-1}\times (n-1)+\min_{k\in\cK}T^{(a)}_k$ $\left(\forall n \in\{1,\cdots,N\}\right)$, $t_{N+1}=\Xi$, $\pi_{k,n}^{(r)}=0$ $\left(\forall k\in \cK,\forall n\in\{1,\cdots,N\}\right)$ and required precision.  
		
		\Repeat{\textnormal{the objective of Problem  (P3) converges}}{ 
			Initialize $\{\beta_{k,n}\}$, $\{\gamma_n\}$, $\rho$. 
			
			\Repeat{\textnormal{the objective of problem  (9) converges}}{

				Obtain the association between tasks and batches $\{\pi_{k,n}\}$ according to  \eqref{eq_pikn}. 
				
				Update dual vairables  $\{\beta_{k,n}\}$, $\{\gamma_n\}$, $\rho$ using the sub-gradient method.
			}
			
			\For{$n=N,\cdots,1$}{Obtain the startup time of batches $\{t_n\}$ according to Theorem~\ref{theorem_2}.}
		Update $\theta_n^{(r)}$ and $\psi_n^{(r)}$ according to \eqref{eq_thetan} and \eqref{eq_psin}, respectively. 
		} 
		Output  the optimal $\{\pi_{k,n}\}$ and $\{t_n\}$. 
	\end{small}  
\end{algorithm}

\begin{figure*}[!t]
	\centering
	\includegraphics[width=0.7\textwidth]{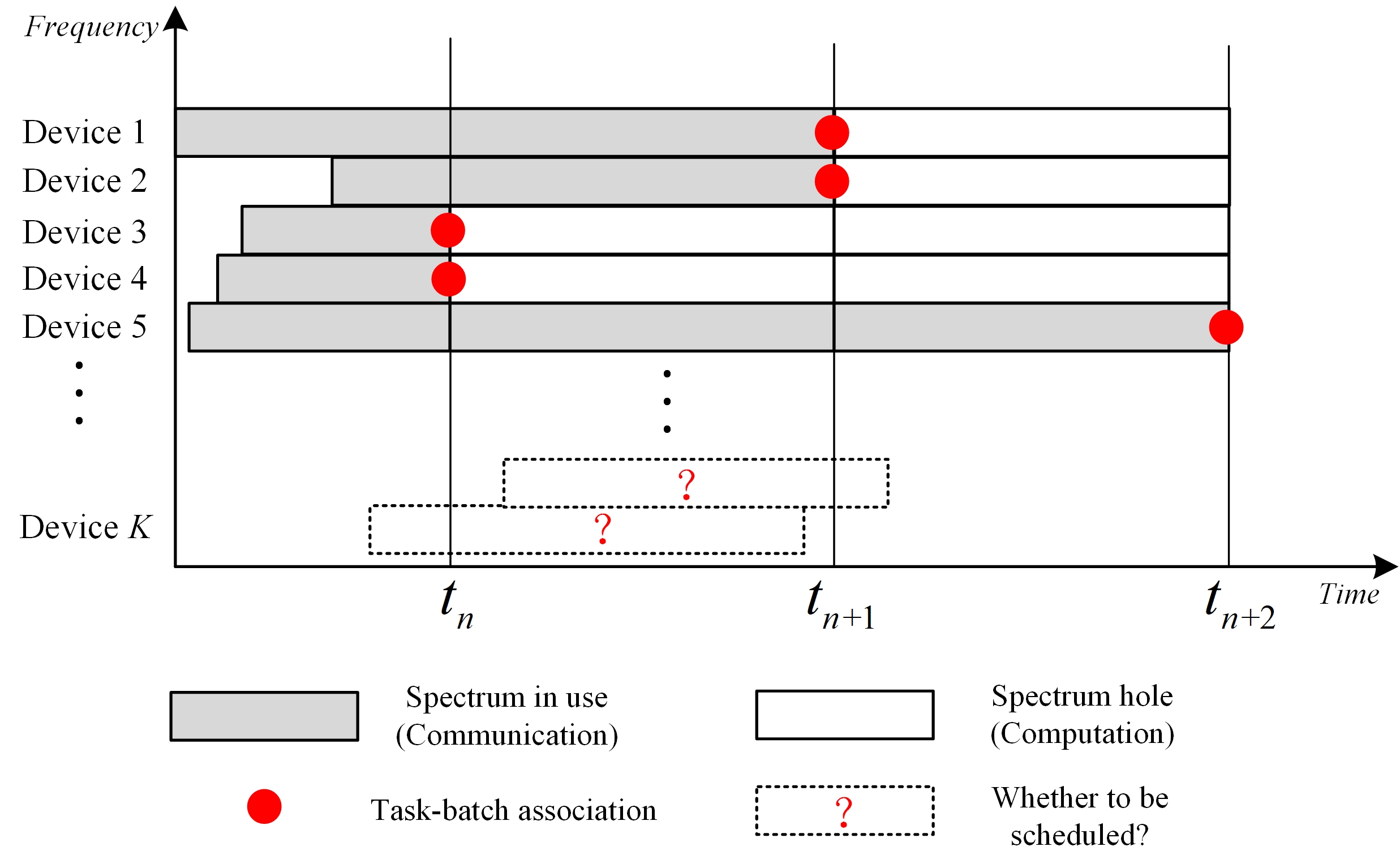}
    \vspace{-0.5cm}
	\caption{Illustration of spectrum-holes.}\label{fig_reallocation}
	\centering
 \vspace{-0.7cm}
\end{figure*}

\vspace{-0.5cm}
\section{Exploiting Spectrum Holes} 
In Section II, individual devices are assigned dedicated frequency bands for identical uploading durations to allow tractable design. Nevertheless, due to  heterogeneous task arrivals and communication latency, there exist \emph{spectrum holes} that can be exploited to further increase the throughput.  As defined, a spectrum hole refers to an unused frequency-time resource block. 
In this section, an algorithm for spectrum hole allocation algorithm is designed by formulating and solving a corresponding throughput maximization problem. 

\vspace{-0.5cm}
\subsection{Spectrum-Hole Allocation Problem} 
As shown in Fig.~\ref{fig_reallocation},  upon arriving time instant $t_n$ for all $n$, the tasks inferred in the $n$-th batch finish their transmission, and thus spectrum holes (i.e., the bandwidth left by scheduled tasks) can be allocated to the unscheduled tasks to improve the throughput or the scheduled tasks to reduce the latency. It should be noted that fixing $\left\{t_n\right\}$ as computed using Algorithm~\ref{alg1}, makes it difficult to insert additional tasks for inference due to the tight deadlines according to \eqref{eq_tn}. This means that even those unscheduled tasks can be uploaded to the server exploiting spectrum holes, they cannot be executed in the original batches without interrupting originally tasks.  
The challenge faced in the current problem lies in adjusting $\left\{t_n\right\}$ to accommodate new tasks without causing the failure of any existing task to meet its deadline. 
Denote $\cF$ as the set of unscheduled tasks based on  Algorithm~\ref{alg1}. For each $t_n$, let $\cS_n$ denote the set of tasks associated with the $n$-th batch.  Hence, at each arrival time, say $t_n$, the total bandwidth of spectrum holes, denoted by $\bar B_n=\sum_{i=1}^{n}\sum_{k\in\cS_i}B_k$, can be allocated to unscheduled tasks in $\cF$ so as to improve the throughput. Moreover, we have to adjust starting time of the $\left(n+1\right)$-th batch such that new scheduled  tasks can be inserted into current batch without causing any  original scheduled tasks to miss its deadline. 
Mathematically, at each checkpoint $t_n$ with $n\in\{1,\cdots,N-1\}$, 
we let $\tilde \ell_k$ represent the data size to be transmitted for task $k$ in the duration from $t_n$ to $t_{n+1}$. Specifically, for tasks in $\cS_{n+1}$,  $\tilde \ell_k$ is given by $B_kr_k\left(\bar t_{n+1}-\max\left\{ T_k^{(a)},\bar t_n\right\}\right)$, and for tasks in $\cF$, $\tilde \ell_k$ is equivalent to $\ell_k$. 
Optimization variables $\bar t_{n+1}$ and $\bar\cK_{n+1}$ represent the adjusted startup instant of the $\left(n+1\right)$-th batch and new-scheduled tasks in the $(n+1)$-th batch, respectively. 
Besides, we let $\bar t_1=t_1$. 
Then, we solve the following optimization problem: 
 \begin{equation*}\text{(P6)}
 	\begin{aligned}
 		\max_{\bar t_{n+1},\bar\cK_{n+1}}\quad & |\bar\cK_{n+1}|,  \\
 		\mathrm{s.t. }\quad\ \  
 		& \max\left\{\max_{k\in\bar\cK_{n+1}\cup\cS_{n+1}} T^{(a)}_k, \bar t_n+d(|\cS_{n}|)\right\} < \bar t_{n+1},\\
 		&\bar t_{n+1} + d\left(|\bar\cK_{n+1}\cup\cS_{n+1}|\right) \leq \min\left\{\min_{k\in\bar\cK_{n+1}\cup\cS_{n+1}} T^{(d)}_k,t_{n+2}\right\},  \\
 		& \sum_{k\in\bar\cK_{n+1}\cup\cS_{n+1}} \frac{ \tilde \ell_k}{r_k\left(\bar t_{n+1}-\max\left\{ T^{(a)}_k,\bar t_n\right\}\right)} \leq \sum_{i=1}^{n+1}\sum_{k\in\cS_i}B_k, \\
 		&\bar\cK_{n+1}\subseteq\cF. 
 	\end{aligned}
 \end{equation*}
In (P6), the first and second constraints specify the causality of the new-scheduled tasks $\bar\cK_{n+1}$ and  previously determined to be scheduled tasks $\cS_{n+1}$ for the $(n+1)$-th batch. 
The third constraint implies that the allocated bandwidth to tasks in $\bar\cK_{n+1}$ and $\cS_{n+1}$ should not exceed the sum of spare bandwidth and previously determined bandwidth for tasks in $\cS_{n+1}$, which ensures that bandwidth reallocation does not influence the bandwidth allocated to other  tasks being transmitted. 
To this end, the spectrum-hole allocation  problem is formulated as a sequence of subproblems, each corresponding to a one-batch optimization problem of tasks scheduling and bandwidth allocation for the $(n+1)$-th batch $(\forall  n\in\{1,\cdots,N-1\})$ at checkpoint $t_n$. By solving Problem (P6),  we attempt to increase the number of scheduled tasks and decrease the total delay simultaneously.

\subsection{Solution Approach}
Problem (P6) is difficult to solve due to the combinatorial nature of this problem. In the following, we show that the optimal solution of (P6) can be obtained by solving a series of feasibility problems each corresponding to a fixed number of new-scheduled tasks $|\bar\cK_{n+1}|=\Pi\in\left\{0,1,\cdots,|\cF|\right\}$: 
\begin{equation*}\text{(P7)}
	\begin{aligned}
		\mathrm{find} \quad & \bar t_{n+1},\bar\cK_{n+1},  \\
		\mathrm{s.t. }\ \ \  
		& \max\left\{\max_{k\in\bar\cK_{n+1}\cup\cS_{n+1}} T^{(a)}_k, \bar t_n+d(|\cS_{n}|)\right\} < \bar t_{n+1},\\
		&\bar t_{n+1} + d\left(\Pi+|\cS_{n+1}|\right) \leq \min\left\{\min_{k\in\bar\cK_{n+1}\cup\cS_{n+1}} T^{(d)}_k,t_{n+2}\right\},  \\
		& \sum_{k\in\bar\cK_{n+1}\cup\cS_{n+1}} \frac{ \tilde \ell_k}{r_k\left(\bar t_{n+1}-\max\left\{ T^{(a)}_k,\bar t_n\right\}\right)} \leq \sum_{i=1}^{n+1}\sum_{k\in\cS_i}B_k, \\
		&\bar\cK_{n+1}\subseteq\cF, |\bar\cK_{n+1}|=\Pi. 
	\end{aligned}
\end{equation*}
\begin{proposition} \label{proposition_1}
	Denote the optimal solution of Problem (P6) by  $\bar\cK_{n+1}^*$, $\bar{t}_{n+1}^*$ and let  $|\bar\cK_{n+1}^*|=\Pi^*$. 
    Problem (P7) is feasible if and only if $\Pi<\Pi^*$. 
\end{proposition}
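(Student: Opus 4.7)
The plan is to prove both directions of the claim by exploiting the monotonicity of the inference delay function $d(\cdot)$: because $d$ is non-decreasing, every constraint of (P7) becomes harder as $\Pi$ grows, so feasibility is itself monotone in $\Pi$. Since Proposition~\ref{proposition_1} is then used to justify a bisection search for $\Pi^*$, the essential content to extract is this monotonicity, packaged around the pivot value $\Pi^*$.

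First I would prove sufficiency: if $\Pi<\Pi^*$, then (P7) is feasible. My approach is constructive. Let $(\bar t_{n+1}^{\star},\bar\cK_{n+1}^{\star})$ be the optimiser of (P6), with $|\bar\cK_{n+1}^{\star}|=\Pi^*$, and pick any subset $\bar\cK_{n+1}\subset\bar\cK_{n+1}^{\star}$ with $|\bar\cK_{n+1}|=\Pi$, keeping $\bar t_{n+1}=\bar t_{n+1}^{\star}$. I would then verify each constraint of (P7) in turn. The causality constraint relaxes since dropping tasks can only shrink $\max_{k\in\bar\cK_{n+1}\cup\cS_{n+1}} T^{(a)}_k$. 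The deadline constraint relaxes on two fronts: $d(\Pi+|\cS_{n+1}|)\leq d(\Pi^*+|\cS_{n+1}|)$ by monotonicity of $d$, and $\min_{k\in\bar\cK_{n+1}\cup\cS_{n+1}} T^{(d)}_k$ is taken over a smaller index set, so it can only increase. The bandwidth constraint relaxes because the left-hand side is a sum of non-negative terms over a subset of the original index set while the right-hand side is unchanged.

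Next I would prove necessity: if (P7) is feasible, then $\Pi<\Pi^*$. I argue by contradiction. If (P7) were feasible with some $\Pi>\Pi^*$, then the feasible pair would also be feasible for (P6) (the constraint sets differ only by the cardinality equality $|\bar\cK_{n+1}|=\Pi$), producing an objective value $\Pi>\Pi^*$ and contradicting the optimality of $\Pi^*$ in (P6). Combining both directions gives feasibility exactly for $\Pi<\Pi^*$.

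The main obstacle I anticipate is the boundary case $\Pi=\Pi^*$. Taking the optimiser $(\bar t_{n+1}^{\star},\bar\cK_{n+1}^{\star})$ itself shows that (P7) is trivially feasible at $\Pi=\Pi^*$, so the strict ``iff'' reading has to be interpreted as capturing the two sharp monotonicity statements that suffice for the bisection search, namely that $\Pi<\Pi^*$ guarantees feasibility and $\Pi>\Pi^*$ guarantees infeasibility. I would therefore phrase the write-up so that these two monotonicity facts are what is formally established, flagging the boundary case to keep the statement consistent with how it is subsequently used in the algorithm.
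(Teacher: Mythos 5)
Your proof is correct and takes essentially the same route as the paper's: a constructive subset argument for $\Pi<\Pi^*$ (reusing $\bar t_{n+1}^*$ and any $\Pi$-element subset of $\bar\cK_{n+1}^*$, with monotonicity of $d(\cdot)$ relaxing every constraint), and a contradiction with the optimality of $\Pi^*$ in (P6) for $\Pi>\Pi^*$. Your flag on the boundary case $\Pi=\Pi^*$ is also well taken --- the optimiser itself witnesses feasibility there, so the literal ``if and only if'' should read $\Pi\leq\Pi^*$; the paper's proof silently skips this case (and moreover contains a typo, asserting (P7) is ``always feasible'' for $\Pi>\Pi^*$ where ``infeasible'' is clearly intended), whereas your write-up handles it explicitly and is the more careful of the two.
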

\noindent The proof is provided in Appendix~\ref{proof_proposition_1}. $\hfill\square$

Proposition~\ref{proposition_1} implies that we can adopt the  bisection method to solve Problem (P6) as described below. Denote the upper bound and lower bound of $\Pi$ by  $\Pi_{\text{up}}$ and $\Pi_{\text{low}}$, respectively. Let $\Pi=\lfloor\left(\Pi_{\text{up}}+\Pi_{\text{low}}\right)/2\rfloor$, where $\lfloor\cdot\rfloor$ denotes the round down operation,  and solve Problem (P7). If (P7) is feasible, which means that the optimal $\Pi^*$ is no smaller than $\Pi$, we set $\Pi_{\text{low}}=\Pi$. Otherwise,  the optimal $\Pi^*$ is no larger than $\Pi$, thus setting $\Pi_{\text{up}}=\Pi$. 
Repeating these procedures until $\Pi_{\text{up}}-\Pi_{\text{low}}\leq1$. Then if Problem (P7) is feasible when  $\Pi=\Pi_{\text{up}}$, the optimal $\Pi^*=\Pi_{\text{up}}$; otherwise, we have $\Pi^*=\Pi_{\text{low}}$. 

However, it remains to solve Problem (P7) with given $\Pi$. Note that with given $\min_{k\in\bar\cK_{n+1}}T^{(d)}_{k}$, the optimal $\left(n+1\right)$-th batch starting time $\bar{t}^*_{n+1}$ should be given as $ \min\left\{\min_{k\in\bar\cK_{n+1}\cup\cS_{n+1}} T^{(d)}_k,t_{n+2}\right\}$ $-d\left(\Pi+|\cS_{n+1}|\right)$ since the required bandwidth for tasks decreases with $\bar{t}_{n+1}$.  
To this end, a tentative  policy is proposed to solve Problem (P7). The principles behind this policy is that $\min_{k\in\bar\cK_{n+1}}T^{(d)}_k$ only takes values from a finite discrete set 
$\left\{T^{(d)}_k\Big|k\in\cF\right\}$ such that we can judge the feasibility of each value of  $\min_{k\in\bar\cK_{n+1}}T^{(d)}_k$ sequentially. 
Specifically, with given $\Pi$, define a set $\cG$ containing all the values of unscheduled tasks' deadlines, i.e., $\cG=\left\{T^{(d)}_k\Big|k\in\cF\right\}$. At each time, we check whether the unscheduled task with the earliest deadline can be added in $\bar\cK_{n+1}$. Set $T^{(d)}$ as the latest  completion instant for the $\left(n+1\right)$-th batch, that is  $T^{(d)}=\min\left\{\min\left\{i|i\in\cG\right\},\min_{k\in\cS_{n+1}}T^{(d)}_k,t_{n+2}\right\}$. Thus,  the optimal batch starting instant is given as  $\bar{t}^*_{n+1}=T^{(d)}-d\left(\Pi+|\cS_{n+1}|\right)$. 
Let  ${\tilde \cS}=\left\{k\Big| T^{(a)}_k<\bar{t}^*_{n+1},T^{(d)}_k\geq T^{(d)},k\in\cF\right\}$ denote the set of tasks that not only satisfy the uploading causality constraint but also their deadlines is no earlier than $T^{(d)}$. This indicates that the tasks in ${\tilde\cS}$ can meet their deadlines requirements even with an earlier deadline. 
Next, we judge whether there exist $\Pi$ new-scheduled tasks in $\tilde\cS$ satisfying bandwidth constraint. Sort the tasks belonging in $\tilde\cS$ 
in ascending order according to the value of  minimum bandwidth required, i.e., $\frac{ \tilde \ell_k}{r_k\left(\bar t_{n+1}-\max\left\{ T^{(a)}_k,\bar t_n\right\}\right)}$ and  assemble the first $\Pi$ tasks in set $\bar \cK_{n+1}$. If the total bandwidth of tasks in $\bar\cK_{n+1}\cup\cS_{n+1}$  is no larger than $\sum_{i=1}^{n+1}\sum_{k\in\cS_i}B_k$ and $\max\left\{\max_{k\in\bar\cK_{n+1}\cup\cS_{n+1}} T^{(a)}_k, \bar t_n+d(|\cS_{n}|)\right\}$ is smaller than $\bar t_{n+1}$, i.e., bandwidth and  task-causality constraints in Problem (P7) are satisfied,  Problem (P7) under current $T^{(d)}$ is feasible, thus making Problem (P7) with current $\Pi$ feasible. Hence, we set $\Pi_{\text{low}}=\Pi$. Otherwise, it is infeasible with current $T^{(d)}$ indicating that task $k=\arg\min\left\{i\in\cG\right\}$ cannot be scheduled in current batch. In this case, we should delete the minimum value in $\cG$. Then update  $T^{(d)}=\min\left\{\min\left\{i|i\in\cG\right\},\min_{k\in\cS_{n+1}}T^{(d)}_k,t_{n+2}\right\}$ and repeat the above steps until the number of  elements in $\tilde \cS$ is less than $\Pi$. This implies that problem with $|\bar\cK_{n+1}|=\Pi$ is infeasible and we set $\Pi_{\text{up}}=\Pi$. At each checkpoint $t_n$ for $n=1,\cdots,N-1$, we solve Problem (P7) and update  $\cF$ as $\cF\setminus\bar\cK_{n+1}$, $\cS_{n+1}$ as $\cS_{n+1}\cup\bar\cK_{n+1}$  until $\cF$ is empty. 

\begin{algorithm}[t]
	\begin{small}
		\caption{Spectrum-Hole Allocation Algorithm}
		\label{alg_reallocation}
		
		Initialize $ \{ T^{(a)}_k\}$,  $\{T^{(d)}_k\}$, $\{A_k\}$,  $\{B_k\}$, $\cF$, $\bar t_1=t_1$, and $\{\cS_n\}$.

		\For{$n=1,\cdots,N-1$}{Set  $\Pi_{\text{up}}=|\cF|$, and $\Pi_{\text{low}}=0$.  
			
		\Repeat{$\Pi_{\text{up}}-\Pi_{\text{low}}\leq1$}{	Set $\Pi=\lfloor\left(\Pi_{\text{up}}+\Pi_{\text{low}}\right)/2\rfloor$. 
			
			Set $\cG=\left\{T^{(d)}_k\Big|k\in\cF\right\}$.

			\Repeat{$|\tilde\cS|<\Pi$}{
				Set $T^{(d)}=\min\{\min\left\{i|i\in\cG\right\},\min_{k\in\cS_{n+1}}T^{(d)}_k,t_{n+2}\}$. 
				
				Set $\bar t_{n+1}=T^{(d)}-d\left(\Pi+|\cS_{n+1}|\right)$. 
				
				
				Set  $\tilde{\cS}=\left\{k\Big| T^{(a)}_k<\bar t_{n+1},T^{(d)}_k\geq T^{(d)},k\in\cF\right\}$

				Sort the elements in $\tilde\cS$ in an ascending order according to $\frac{\tilde \ell_k}{r_k\left(\bar t_{n+1}-\max\{ T^{(a)}_k,\bar t_n\}\right)}$. 
				
				Assemble the first $\Pi$ elements of $\tilde\cS$ in set $\bar\cK_{n+1}$. 
				
				\eIf{the  bandwidth and task-causality constraints are satisfied}{
					$\Pi_{\text{low}}\leftarrow\Pi$. 
					
					Break.
				}{
					$\cG\leftarrow\cG\setminus\min\{i|i\in\cG\}$}
				
			}
			\If{$|\tilde\cS|<\Pi$}{
				$\Pi_{\text{up}}\leftarrow\Pi$.}
			
		}
		\eIf{$\Pi_{\text{up}}$ is feasible}{$\Pi^*=\Pi_{\text{up}}$}{$\Pi^*=\Pi_{\text{low}}$}	
		
		Update $\cF\leftarrow\cF\setminus\bar\cK_{n+1}$, $\cS_{n+1}\leftarrow\cS_{n+1}\cup\bar\cK_{n+1}$.}
		
	\end{small}  
\end{algorithm}

The detailed steps for spectrum-hole allocation scheme is summarized in Algorithm~\ref{alg_reallocation}  whose computational complexity lies in solving Problem (P7) with given $\Pi$.  
Specifically, with given $\Pi$, the complexity for (P7) is $\mathcal{O}\left(K^2\right)$. Since we have to solve (P7) with each $\Pi$ and $n$, the total computational  complexity for  the proposed spectrum-hole allocation algorithm is estimated as $\mathcal{O}\left((N-1)K^2\log_2(K)\right)$. 
It should be noted that Algorithm~\ref{alg_reallocation} can obtain a globally optimal solution for Problem (P6) with low complexity.

\section{Extensions and Discussion}

\subsection{Online Task Admission}
During the process of task uploading and inference, new tasks may arrive and submit service requests \cite{9812874}. In this scenario, the proposed Algorithm~\ref{alg_reallocation} can be modified to support online admission of new tasks to leverage spectrum holes.  
Specifically, the proposed Algorithm~\ref{alg1} is executed as batching initialization for existing tasks. Then new tasks arriving during the inference process are first stored locally at devices. 
Similar to Algorithm~\ref{alg_reallocation}, at each checkpoint $t_n$, we update the set of active tasks,  $\cF$, to include selected new tasks that are deemed feasible for successful execution using spare resources.  
To this end,  Algorithm~\ref{alg_reallocation} can be executed again to update the resource allocation to accommodate the new tasks. 

\subsection{Frequency-Selective Channels}
The current assumption of frequency non-selective can be relaxed as follows. A frequency selective channel  can be partitioned using  orthogonal frequency division multiplexing  (OFDM) into sub-channels with heterogeneous gains. Following  \cite{7762913,6025328},  new indicator variables can be introduced to denote the association between sub-channels and tasks. Then the throughput maximization problem can be  formulated as a MINLP problem containing two kinds of binary optimization variables for  sub-channel allocation and  task-batch association, respectively. Despite being more complex, the problem can be solved efficiently using conventional MINLP methods such as convex relaxation and branch-and-bound, or latest approach using machine learning (see e.g., \cite{9449944}).

\section{Simulation Results}
\subsection{Simulation Settings}
The default settings are as follows. 
There are  $K=100$ devices, with task arrivals uniformly and independently generated in the time interval of $[0, 1]$ s. The size of feature vectors is set as $10$ KBytes. The delay requirements of tasks follow the uniform distribution in $[0.05,2]$ s. The inference latency profile  with respect to the batch size as reported in \cite{franklin2019nvidia} is adopted, which is generated from a  ResNet-50 model implemented on JETSON TX1 and the ImageNet dataset.  The channel gains between devices and server follow independent  Rayleigh fading with the average power loss being $10^{-3}$. The transmit \emph{signal-to-noise ratio} (SNR) of devices is set as $20$ dB. The constant $\delta$ in $\ell_0$-norm approximation \eqref{eq_ell0} is $10^{-15}$. 
The following schemes are considered in performance  comparison:
\begin{itemize}
	\item \textit{Proposed Algorithm:} See Algorithm~\ref{alg1}. 
	\item \textit{Equal Bandwidth Allocation Scheme:} The total bandwidth is evenly allocated to devices while task scheduling follows Algorithm~\ref{alg1}.
	\item \textit{Spectrum-Hole Allocation Scheme:}  Algorithm~\ref{alg1} enhanced with spectrum-hole allocation using  Algorithm~\ref{alg_reallocation}. 
	\item \textit{Greedy Batching Scheme:} Upon finishing executing the previous  batching, the server greedily assembles all tasks that arrived during the previous batch into a new  batch and makes inferences on them \cite{9355312}. When the inference is finished, those tasks that do not satisfy deadline requirements are discarded.   
	\item \textit{Single Batch Scheme:} 
    The optimal single batch scheme in \cite{10038543} is modified for asynchronous task arrivals. In particular, the optimal batch starting instant is determined using an algorithm similar to Algorithm~\ref{alg_reallocation}. 
\end{itemize}
The performance metric of task completion rate is defined as the ratio between completed tasks and all  tasks. 
Note that the metric measures system throughput.

\subsection{Effect of Task Number}

\begin{figure}[!t]
	\centering
		\includegraphics[width=0.7\textwidth]{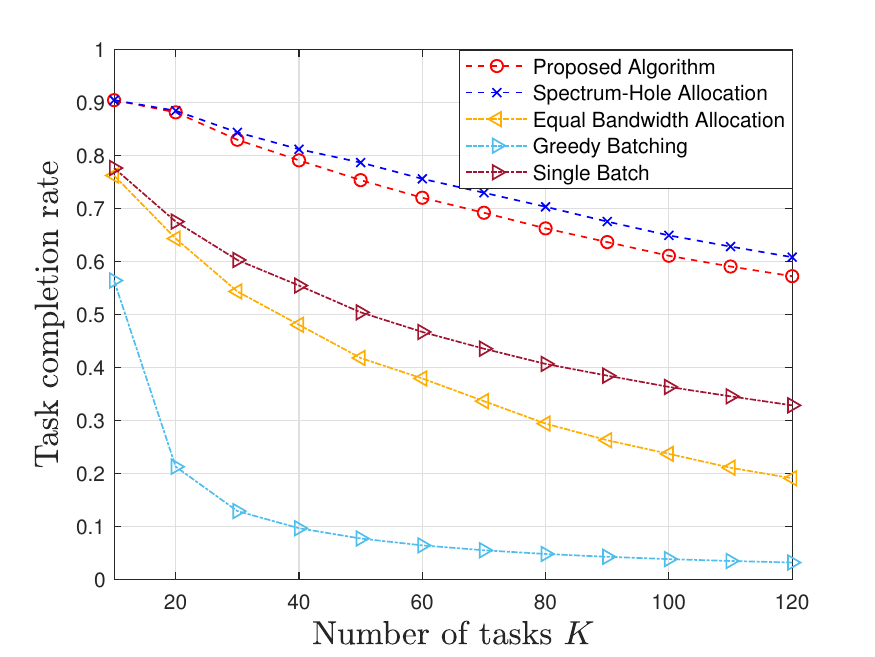} 
		\caption{Task completion rate versus number of tasks.}\label{Throu_K} 
	\end{figure}
 \begin{figure}
		\centering
		\includegraphics[width=0.7\textwidth]{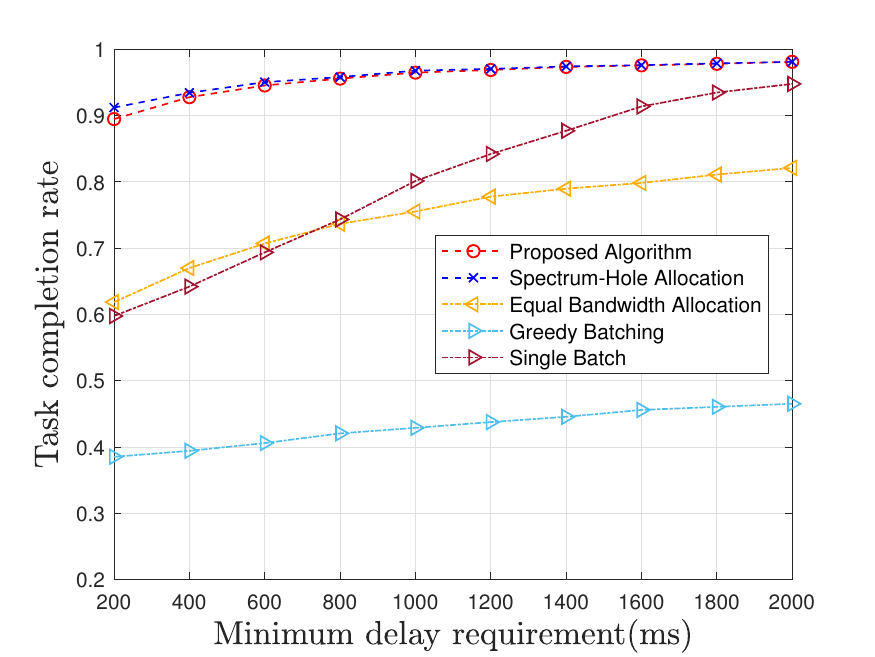}
		\caption{Task completion rate versus the minimum delay requirement.}\label{Throu_delay}
\end{figure}

Fig.~\ref{Throu_K} compares task completion rates between different schemes for a varying number of tasks. The proposed JBAS scheme and its enhanced version with spectrum-hole exploitation achieve the highest  rates.  This shows the advantages of jointly optimizing batching, scheduling, and bandwidth allocation  
 so as to accommodate the heterogeneity of task arrivals and deadlines. In contrast, the three benchmarking schemes are less effective in accounting for the different delay requirements of tasks and balancing the tradeoff between batch size and batch startup instants. 
 As a result, they suffer loss on  system throughput that is larger as the number of tasks grows. 
 On the other hand, we can observe that the Spectrum-Hole Allocation Scheme can enhance the throughput of  the Proposed Scheme by an average of $2.8\%$. Furthermore, as observed from Fig.~\ref{Throu_K}, the task completion rates  decrease as the number of tasks grows. This indicates that the limited communication and computation source leads to an increasing slower in the number of completed tasks as the total number grows. 

\subsection{Effect of Delay Requirement}
To investigate the effect of delay requirements on system throughput, we vary the minimum delay requirement from $50$ to $1450$ ms while the maximum delay requirement is fixed at $2000$ ms. 
The curves of task completion rate  versus the minimum delay requirements  are depicted in Fig.~\ref{Throu_delay}. One can observe that the  task completion rates  of all schemes gradually increase as the minimum delay requirement relaxes.  The reason is that less bandwidth is required for each task for uploading and the server has more computation time.  From Fig.~\ref{Throu_delay}, we can observe that the throughput improvement of the Spectrum-Hole Allocation Scheme on top of the Proposed Scheme reduces from $1.7\%$ to zero as the minimum delay increases from $200$ to $2000$ ms. This  can be explained by that as the minimum delay increases, the radio resource constraints are relaxed and the communication bottleneck is dominated by the computation counterpart. 
In contrast, the  throughput improvement of  Spectrum-Hole Allocation Scheme is more significant in spectrum constrained scenarios, (i.e., a large number of tasks and tight delay requirements) as shown in  Fig.~\ref{Throu_K} and Fig.~\ref{Throu_delay}. Furthermore, as the minimum delay increases, the throughput of the One Batch Scheme improves rapidly as the loss on synchronizing tasks' starting time reduces.

\begin{figure}[!t]
		\centering
		\includegraphics[width=0.7\textwidth]{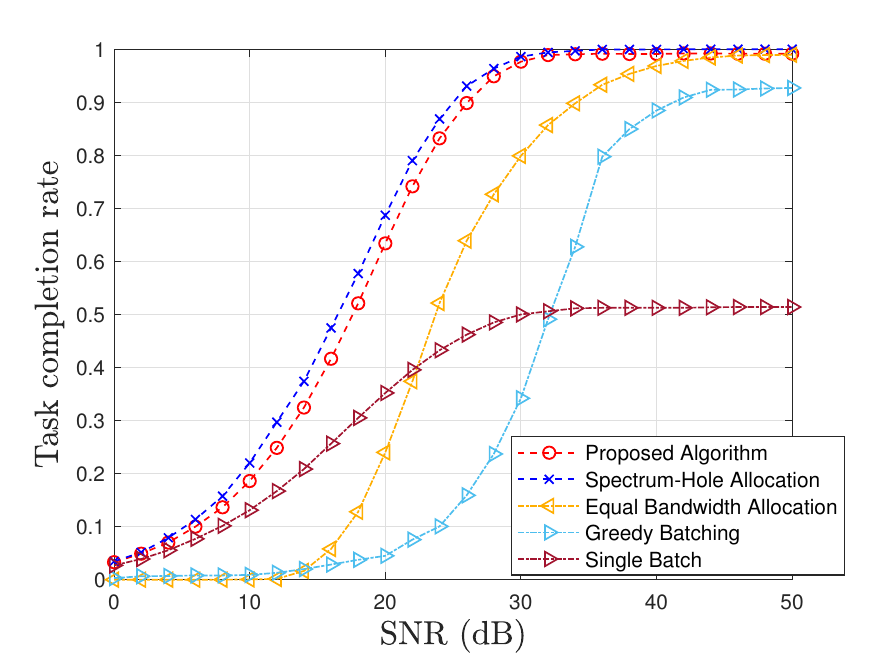} 
		\caption{Task completion rate versus transmit  SNR.}\label{Throu_SNR} 
	\end{figure}

\subsection{Effect of Transmit SNR}
In Fig.~\ref{Throu_SNR}, the curves of task completion rate performance versus transmit SNR are plotted. 
As the transmit SNR increases, the task completion rate first improves rapidly and then saturates. The early rapid improvement reflects the overcoming of the communication bottleneck. 
As the SNR is further increased,  the  bandwidth constraint becomes inactive, leading to throughput saturation. 
In this operation regime, the computation bottleneck dominates and limits system throughput. 
 One can observe that with sufficiently large SNR (e.g., $50$ dB), the Proposed, Spectrum-Hole Allocation, and Equal Bandwidth Allocation Schemes can complete almost all tasks, while the Greedy Batching Scheme only reaches $92\%$ task completion rate, which verifies the need of batching optimization. 
On the other hand, the Single Batch Scheme performs worst at a large SNR, i.e., less than  $48\%$ task completion rate, indicating the importance of multiple batches for asynchronous tasks arrivals.  

\vspace{-5mm}
\subsection{Effect of Batch Number}
 \begin{figure}
		\centering
		\includegraphics[width=0.7\textwidth]{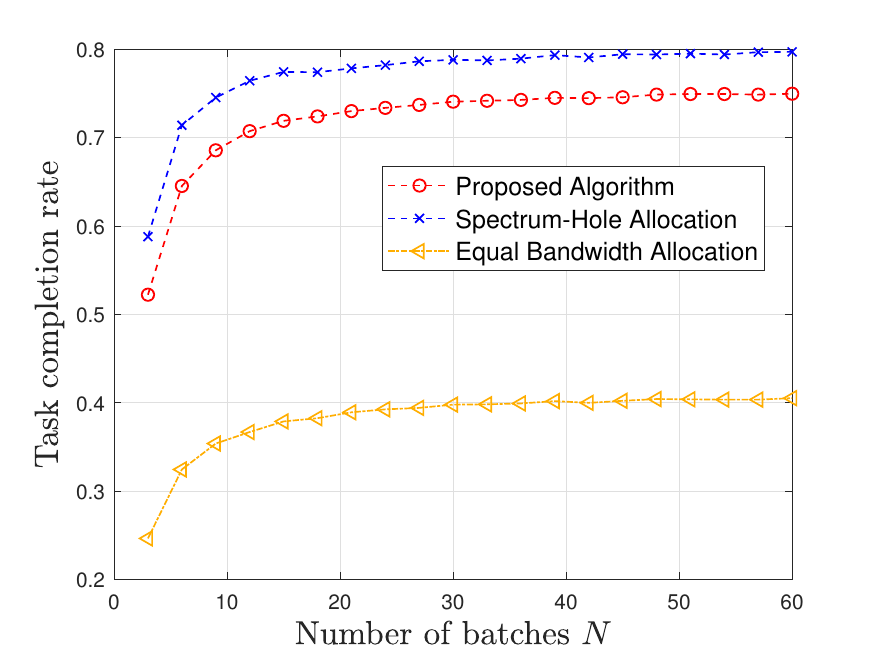}
		\caption{Task completion rate versus number of batches with $K=60$.}\label{Throu_Batchnum}
\end{figure}

Fig.~\ref{Throu_Batchnum} shows the curves of task completion rate  versus number of batches from solving Problem (P3). 
As can be observed, as the number of batches grows, the system throughput increases and then saturates as the effective number of batches, namely the non-empty ones, converges to a fixed value. 
Besides, the proposed Scheme and Spectrum-Hole Allocation Scheme achieve the throughput improvement of $39.34\%$ and $47.50\%$, respectively, compared with Equal Bandwidth Allocation Scheme when $K=60$.

\begin{figure*}[!t]
	\centering
	\includegraphics[width=0.7\textwidth]{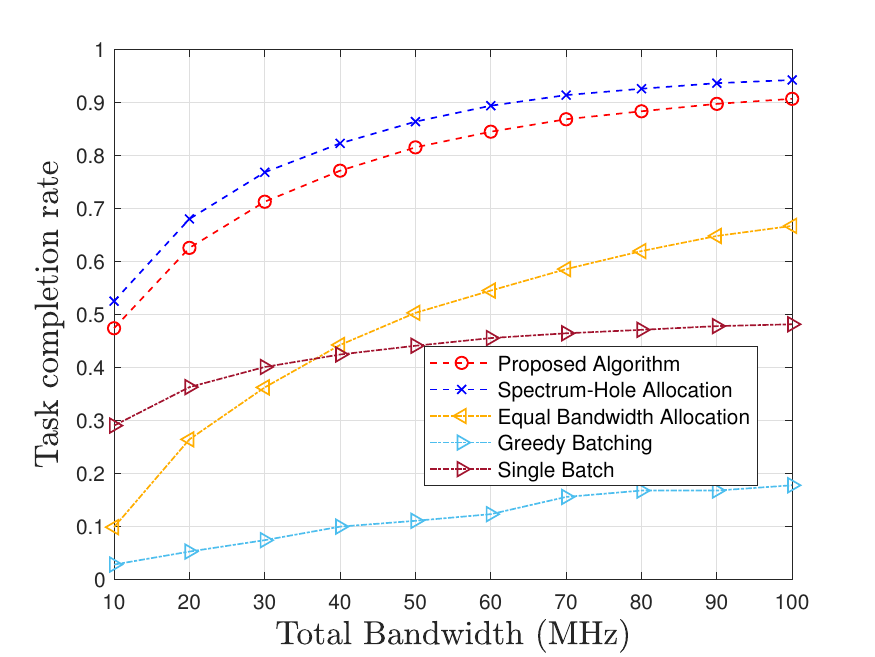}
	\caption{Task completion rate versus total bandwidth.}\label{Throu_BW}
	\centering
\end{figure*}

\subsection{Effect of Bandwidth}
In Fig.~\ref{Throu_BW}, the curves of  task completion rate versus total  bandwidths are plotted.  As can be seen, as the bandwidth increases, the throughput of all schemes increases due to the communication resources getting more abundant. The Proposed Scheme achieves $29.80\%$, $66.48\%$, and $35.89\%$ throughput gains compared with Equal Bandwidth Allocation Scheme, Greedy Batching Scheme, and One Batch Scheme, respectively. Moreover, the Spectrum-Hole Allocation Scheme can further improve the throughput by $4.61\%$.

\section{Conclusion}
In this paper, we have presented a JBAS  framework for high-throughput multiuser edge AI in the practical scenarios with heterogeneous task arrivals and deadlines. 
  The number of batches, batch startup instants, task-batch association, as well as bandwidth allocation have been jointly optimized to maximize the system throughput.
Moreover, spectrum holes have been exploited to further increase the throughput.  
We find that judiciously assembling tasks into multiple batches is  important to ensure high throughput in practice.  
However, the communication model considered in this paper is simple for the sake of tractability.  For future works, it is promising to integrate batching with advanced  transmission techniques such as \emph{non-orthogonal multiple access} (NOMA)  and \emph{multiple-input multiple-output} (MIMO). In another interesting direction, multiple-cell edge AI with batching couples communications in different cells and computation at different servers and hence is more challenging to design.

\appendix
\subsection{Proof of Theorem~\ref{theorem_3}} \label{proof_theorem_3}
Denote the optimal solution of Problem (P2) by $\left(\left\{t^{*}_n\right\},\left\{\pi_{k,n}^*\right\},N^* \right)$. We first prove that for the optimal solution of Problem (P2), the optimal value of (P3) is no less than that of (P2).  Consider the following two cases:

\textit{Case 1: $N^*=K$.} If the optimal $N^*=K$,  $\left(\left\{t^{*}_n\right\},\left\{\pi_{k,n}^*\right\} \right)$ is feasible to Problem (P3) since that (P3) is the case when $N=K$. 

\textit{Case 2: $N^*<K$.} In this case, introducing new variables  $t^{^*}_{n}=\Xi$ $\left(\forall n\in\{N^*+1,\cdots,K\}\right)$ and $\pi_{k,n}^*=0$ $\left(\forall k\in\cK, \forall n\in\{N^*+1,\cdots,K\}\right)$. Then combing the optimal solution of (P2) and the new introduced variables, the constructed variables $\left(\left\{t^{*}_n\right\}_{\forall n\in\cK},\left\{\pi_{k,n}^*\right\}_{\forall k\in\cK,\forall k\in\cK}  \right)$ satisfy all the constraints in Problem (P3). Moreover, the optimal value of (P2) is equal to the value of (P3). Therefore, the optimal value of (P3) is no less than that of (P2). 

Next, since the optimal solution of (P3) always satisfies the constraints of (P2). Hence, the feasibility of (P3) is included in that of (P2). In other words, the optimal solution of (P3) is feasible to (P2). Thus, the optimal value of (P2) is no less than that of (P3). 

Combining that the optimal value of (P2) is no less than and also no larger than that of (P3), we can conclude that Problem (P3) is equivalent to Problem (P2).  

\subsection{Proof of Theorem~\ref{theorem_2}} \label{proof_theorem_2}
Startup time $t_n$ is lower bounded by the task-arrival time instants of its associated tasks and upper bounded by deadlines and start time of the next batch $t_{n+1}$. Since the allocated bandwidth $B_k$ decreases with $t_n$, we should set $t_n$ as large as possible in order to satisfy the bandwidth constraint. Through solving $t_{n}$ sequentially from $n=N$ to $n=1$, we can obtain the optimal solution of $t_{n}$. Specifically, for the $N$-th batch,
we consider the following two cases: 1) If the $N$-th batch is non-empty, i.e., $\sum_{k=1}^K\pi_{k,N}>0$, the time instant that the $N$-th batch finishes its inference $t_N+d_N\left(\pi_N\right)$ is upper bounded by the deadlines of its associated tasks. Hence, we should let $t_N=\min_{ k\in\cK_N}T^{(d)}_k-d_N\left(\pi_N\right)$. 2) If the $N$-th batch is empty, i.e., $\sum_{k=1}^K\pi_{k,N}=0$, we should set $t_N$ as large as possible such that it will not affect the value of $t_{N-1}$. Without loss of generality, we set $t_N=\Xi$. Subsequently, consider the $(N-1)$-th batch. Similarly, two cases are considered. If it is non-empty, time instant $t_{N-1}+d_{N-1}\left(\pi_{N-1}\right)$ is restricted not only by the deadlines of its associated tasks but also by the startup time of the $N$-th batch. Therefore,  $t_{N-1}$ is set to $\min\left\{\min_{k\in\cK_{N-1}}T^{(d)}_k,t_{N}\right\} -  d_{N-1}\left(\pi_{N-1}\right)$. If the $(N-1)$-th batch is empty, in order to mitigate the influence on the startup time of the $(N-2)$-th batch, we set $t_{N-1}=t_N$. Following this procedure until the startup time of the first batch $t_1$ is obtained, completing the proof.

\subsection{Proof of Proposition~\ref{proposition_1}} \label{proof_proposition_1}
	First, when $\Pi<\Pi^*$, we have $d(\Pi+|\cS_{n+1}|)\leq d(\Pi^*+|\cS_{n+1})$ since the inference delay function is non-decreasing.  Therefore, let $\bar t_{n+1}=\bar{t}_{n+1}^*$ and $\bar\cK_{n+1}$ be an arbitrary subset of $\bar\cK_{n+1}^*$ with the size of $\Pi$. We can deduce $\bar{t}_{n+1}$ and $\bar\cK_{n+1}$ satisfy all the constraints of (P7). Thus, Problem (P7) is feasible. 

Subsequently, when  $\Pi>\Pi^*$, Problem (P7) is always feasible, otherwise, the optimal $|\bar\cK_{n+1}^*|$ of (P6) is larger than $\Pi^*$, which contradicts that $|\bar\cK_{n+1}^*|=\Pi^*$.  That completes the proof.

\bibliographystyle{IEEEtran}
\bibliography{Ref}

\end{document}